\newtheorem{theorem}{Theorem}
\newtheorem{definition}{Definition}
\newtheorem{remark}{Remark}
\newtheorem{proof}{Proof}
\def\ifundefined{\@ifundefined}
\makeatother \setcounter{page}{1}
\renewcommand{\citepunct}{,\penalty\@m\hskip.13emplus.1emminus.1em}
\renewcommand{\citedash}{\hbox{--}\penalty\@m}
\begin{document}


\title{On the Degrees of Freedom of Asymmetric MIMO Interference Broadcast Channels}
\author{{Tingting Liu}, {\em Member, IEEE} and {Chenyang Yang}, {\em Senior Member, IEEE}
\thanks{T. Liu and C. Yang are with the
School of Electronics and Information Engineering, Beihang
University, Beijing 100191, China. (E-mail: ttliu@ee.buaa.edu.cn,
cyyang@buaa.edu.cn)}}

\maketitle
\begin{abstract}
In this paper, we study the degrees of freedom (DoF) of the asymmetric multi-input-multi-output interference broadcast channel (MIMO-IBC). By introducing a notion of connection pattern chain, we generalize the genie chain proposed in [11] to derive and prove the necessary condition of IA feasibility for asymmetric MIMO-IBC, which is denoted as irreducible condition. It is necessary for both linear interference alignment (IA) and asymptotic IA feasibility in MIMO-IBC with arbitrary configurations. In a special class of asymmetric two-cell MIMO-IBC, the irreducible condition is proved to be the sufficient and necessary condition for asymptotic IA feasibility, while the combination of proper condition and irreducible condition is proved to the sufficient and necessary condition for linear IA feasibility. From these conditions, we derive the information theoretic maximal DoF per user and the maximal DoF per user achieved by linear IA, and these DoFs are also the DoF per user upper-bounds of asymmetric $G$-cell MIMO-IBC with asymptotic IA and linear IA, respectively.
\end{abstract}


\setlength \arraycolsep{1pt}
\section{Introduction}
The degrees of freedom (DoF) can reflect the potential of
interference networks, which is the first-order approximation of sum
capacity in high signal-to-noise ratio regime
\cite{Jafar2007b,Jafar_3cell}. Recently, significant research
efforts have been devoted to find the information theoretic DoF for
multi-input-multi-output (MIMO) interference channel (MIMO-IC)
\cite{Jafar2007b,Jafar_3cell,Cadambe2008b,Tse_3cell_2011,Jafar2010DoF,Jafar_IAchainKcell,Yetis2010,Luo2012}
and MIMO interference broadcast channel (MIMO-IBC)
\cite{Zhuguangxi2010,TTTSP2013,TTTSP2013b}.

For a symmetric $G$-cell MIMO-IC where each base station (BS) and
each mobile station (MS) have $M$ antennas, the study in
\cite{Cadambe2008b} showed that  the information theoretic maximal
DoF per user is $M/2$, which can be achieved by asymptotic
interference alignment (IA) (i.e., with infinite time/frequency
extension). It implies that the sum DoF can linearly increase as
$G$, and the interference networks are not interference-limited
\cite{Cadambe2008b}. Encouraged by such a promising result, many
recent works strive to analyze the DoF for MIMO-IC and MIMO-IBC with
various settings and devise interference management techniques to
achieve the maximal DoF.

So far, the existing studies focus on the symmetric system where
each BS and each MS have $M$ and $N$ antennas, respectively. For a
three-cell symmetric MIMO-IC, the information theoretic maximal DoF
was obtained in \cite{Jafar_3cell}, which can be achieved by linear
IA (i.e., without any symbol extension or only with finite spatial
extension)\cite{Tse_3cell_2011}. For a $G$-cell symmetric MIMO-IC, the information
theoretic maximal DoF was only obtained for some configurations
\cite{Jafar2010DoF,Jafar_IAchainKcell}. For symmetric MIMO-IBC, the information
theoretic maximal DoF was obtained in \cite{TTTSP2013b} for
arbitrary configurations. The result indicates that the information
theoretic maximal DoF can be divided into two regions according to the ratio
of $M/N$. In the first region, the sum DoF of the system linearly
increases with the number of cells, which can be achieved by
asymptotic IA. In the second region, the DoF is a piecewise linear
function of $M$ and $N$ alternately, which can be achieved by
linear IA. Considering that asymmetric MIMO-IBC is more complex than
symmetric MIMO-IBC and expecting that the results for asymmetric
cases may be extended from symmetric ones, there are only a few
results on asymmetric systems. So far, only the \emph{proper
condition} is obtained and proved to be necessary for linear IA
feasibility in asymmetric MIMO-IC \cite{Luo2012} and asymmetric
MIMO-IBC \cite{TTTSP2013}. The following questions are still open
for asymmetric MIMO-IC and MIMO-IBC: what is the information
theoretic maximal DoF? when linear IA can achieve the information
theoretic maximal DoF?

To understand the potential of practical networks, including both homogeneous and heterogeneous
networks, we need to investigate the information theoretic maximal
DoF for asymmetric systems. Moreover, in symmetric systems, all
BSs or users have the same spatial resources. Consequently, it is
hard to know which resources of BSs or users participate in managing
interference, how the BSs or users remove the interference jointly,
and what is the impact on the DoF if some nodes increase or reduce
some resources (e.g., turning on or turning off antennas for energy
saving)?

In this paper, we investigate the DoF of the
asymmetric MIMO-IBC. By finding the difference of deriving the necessary conditions between symmetric and asymmetric systems and introducing a notion of connection pattern chain, we generalize the genie chain proposed in \cite{TTTSP2013b} into asymmetric MIMO-IBC and derive a necessary condition for linear IA and asymptotic IA feasibility, denoted as irreducible condition. From the irreducible condition and the combination of the proper condition and irreducible condition, we obtain the information theoretic DoF outer-bound and the DoF outer-bound achieved by linear IA in $G$-cell asymmetric MIMO-IBC, respectively. In addition, we also prove that these DoF bounds of a special class of two-cell MIMO-IBC can be achieved by asymptotic IA and linear IA, respectively.

\section{Necessary Conditions for Asymmetric $G$-cell MIMO-IBC}
Consider a $G$-cell MIMO system, where BS$_i$ equipped with $M_{i}$
antennas transmits to $K_{i}$ users each with
$N_{i_{1}},\cdots,N_{i_{K_{i}}}$ antennas in the $i$th cell, $i = 1,
\cdots, G$. BS$_i$ respectively transmits
$d_{i_{1}},\cdots,d_{i_{K_{i}}}$ data streams to its $K_i$ users,
then the total number of data streams in the $i$th cell is
$d_{i}=\sum_{k=1}^{K_{i}}d_{i_k}$. Assume
that there are no data sharing among the BSs and every BS has
perfect CSIs of all links. This is a scenario of asymmetric MIMO-IBC,
and the configuration is denoted as $\prod_{i=1}^{G}\left(M_i\times
\prod_{k=1}^{K_i}(N_{i_k},d_{i_k})\right)$. When $M_i=M$, $K_i=K$,
$N_{i_k}=N$ and $d_{i_k}=d$, $\forall i,k$, the system becomes a symmetric MIMO-IBC denoted as $\left(M\times
(N,d)^{K}\right)^{G}$.

Because both the IA with and without symbol extension will be
addressed, we define two terminologies to be used throughout the
paper.
\begin{definition}\label{Linear IA}
\emph{Linear IA} is the IA without any symbol extension or only with finite spatial extension \cite{Yetis2010}.
\end{definition}
\begin{definition}\label{asymptotic IA}
\emph{Asymptotic IA} is the IA with infinite time or frequency extension \cite{Jafar_3cell}.
\end{definition}

In this section, we study two necessary conditions of IA
feasibility.
\subsection{Proper Condition}
When the channels are generic (i.e., drawn from a continuous probability distribution), the proper condition is one necessary condition for linear IA feasibility, which has been obtained for asymmetric MIMO-IBC in \cite{TTTSP2013}. To find the difference of deriving the necessary conditions between symmetric and asymmetric systems, we first review the proper condition in brief, which is
\begin{align}\label{Eq:Proper_Condition_Asym}
&\sum\nolimits_{j:({i_k},j) \in {\cal I}} ( {{M_j} - {d_j}} ) {d_j} + \sum\nolimits_{{i_k}:({i_k},j) \in {\cal I}} ( {{N_{i_k}} - {d_{i_k}}}) {d_{i_k}} \nonumber\\
\ge &\sum\nolimits_{({i_k},j) \in {\cal I}} {{d_{i_k}}{d_j}} ,\forall {\cal I} \subseteq {\cal J}
\end{align}
where
\begin{align}\label{Eq:Definition_J}
{\cal J} \triangleq \{ ({i_k},j)|1 \le i \ne j \le G,1 \le k \le {K_i}\}
\end{align}
denotes the set of all MS-BS pairs that mutually interfering each other and ${\cal I}$ is an arbitrary subset of
$\mathcal{J}$.

From \eqref{Eq:Definition_J}, it is not hard to obtain that the set ${\cal J}$ has
\begin{align}\label{Eq:Num_J}
  L_{{\cal J}}=2^{|{\cal J}|}-1 = 2^{\sum_{j=1}^{G}\sum_{i=1,i\neq j}^{G}K_{i}}-1
\end{align}
nonempty subsets, where $|{\cal J}|$ is the cardinality of ${\cal
J}$. As a result, \eqref{Eq:Proper_Condition_Asym} includes $L_{{\cal J}}$ inequalities. By contrast, the proper condition for symmetric MIMO-IBC includes only one inequality, which is \cite{TTTSP2013}
\begin{align}\label{Eq:Proper_Condition_sym}
  M+N\geq(GK+1)d
\end{align}
and obtained by only considering ${\cal I}={\cal J}$ in \eqref{Eq:Proper_Condition_Asym}.

Then, we introduce another terminology.
\begin{definition}\label{Connection――pattern}
\emph{Connection pattern} is a graph that represents which BSs and users that mutually interfering each other in an arbitrarily connected MIMO-IBC.
\end{definition}

Each subset ${\cal I}$ corresponds to  a connection pattern. Take a two-cell MIMO-IBC where $K_1=2,K_2=1$ as an example, denoted as \emph{Ex. 1}. Since $L_{{\cal J}}=2^{K_1+K_2}-1=7$, there are seven connection patterns in Ex. 1 as shown in Fig.
\ref{fig:Diff_Connection}. According to \eqref{Eq:Definition_J}, we
know that the subset ${\cal I}={\cal J}$ corresponds to a \emph{fully connected pattern}, i.e.,
Pattern I, and the subset ${\cal I}\subset {\cal J}$ corresponds to a
\emph{partially connected pattern}, e.g., each one in Patterns II$\sim$VII.
\begin{figure}[htb!]
\centering
\includegraphics[width=1.0\linewidth]{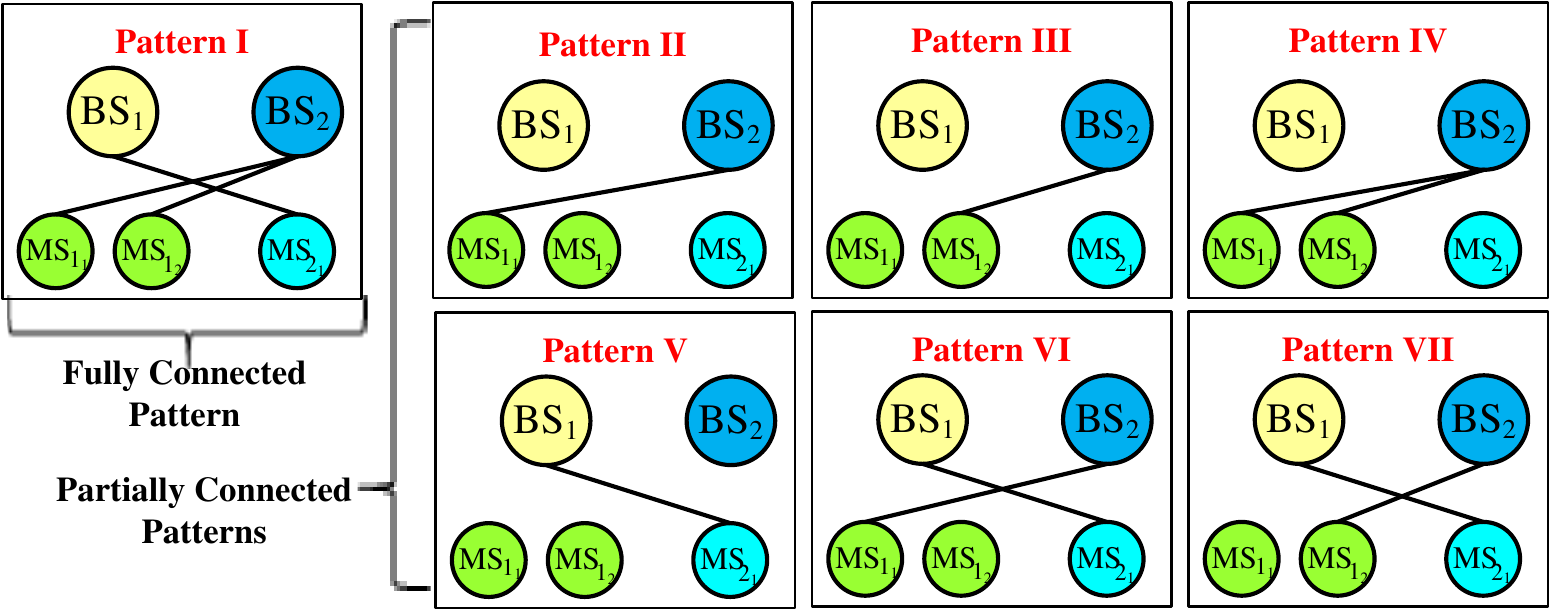}
\caption{Connection patterns in Ex. 1.} \label{fig:Diff_Connection}
\end{figure}

Comparing \eqref{Eq:Proper_Condition_Asym} and \eqref{Eq:Proper_Condition_sym}, we find that for the symmetric MIMO-IBC, it is enough to only consider the fully connected pattern. However, for the asymmetric MIMO-IBC, since all BSs or users have different number of antennas and suffer from different interference, it is necessary to consider all possible connection patterns rather than only consider the fully connected pattern.

%

\subsection{Irreducible Condition}
In \cite{TTTSP2013b}, another necessary condition for both linear IA and asymptotic IA feasibility has been found for symmetric MIMO-IBC, which leads to an information theoretic DoF upper-bound. It was called \emph{irreducible condition} in
\cite{TTTSP2013} since it can ensure to eliminate a kind of
irreducible inter-cell interference (ICI).

A usual way to derive the information theoretic DoF upper-bound is
introducing a \emph{genie} \cite{Jafar_3cell}. It is not easy to
find a useful and smart genie to provide the tightest possible
upper-bound. The analysis in \cite{TTTSP2013b} indicates that when
dividing the \emph{interference subspace} at each BS or user into
two linearly independent subspaces, i.e., \emph{resolvable subspace}
and \emph{irresolvable subspace}, we can construct a wise genie from
the irresolvable subspace. For some antenna configurations, there
may exist multiple irresolvable subspaces that constitute an
\emph{irresolvable subspace chain}, called \emph{subspace chain}.
Correspondingly, the genies in the subspace chain constitute a
\emph{genie chain}. To derive the DoF upper-bound (equivalently the
irreducible condition) for asymmetric MIMO-IBC, we need to first
investigate the subspace chain and genie chain. Although the principle of constructing the genie chain for
asymmetric MIMO-IBC is similar with that for symmetric MIMO-IBC, the
results in \cite{TTTSP2013b} cannot be extended in a straightforward
manner, which is shown in the forthcoming analysis.

\subsubsection{Subspace Chain}
The subspace chain depends on how BSs and users eliminate the interference cooperatively. When BSs first eliminate one part of ICIs and then return the remaining part to users, the exists a subspace chain that starts from the BS side, denoted as Chain A. Meanwhile, when users first eliminate one part of ICIs and then return the remaining to BSs, there also exists another subspace chain that starts from the user side,
denoted as Chain B. To derive the whole necessary condition, we need to consider Chains A and B simultaneously.

For easy understanding, we illustrate the subspace chain for asymmetric MIMO-IBC by a subspace chain of Ex. 1 in Fig. \ref{fig:Chain_A}.

\begin{figure}[htb!]
\centering
\includegraphics[width=1.0\linewidth]{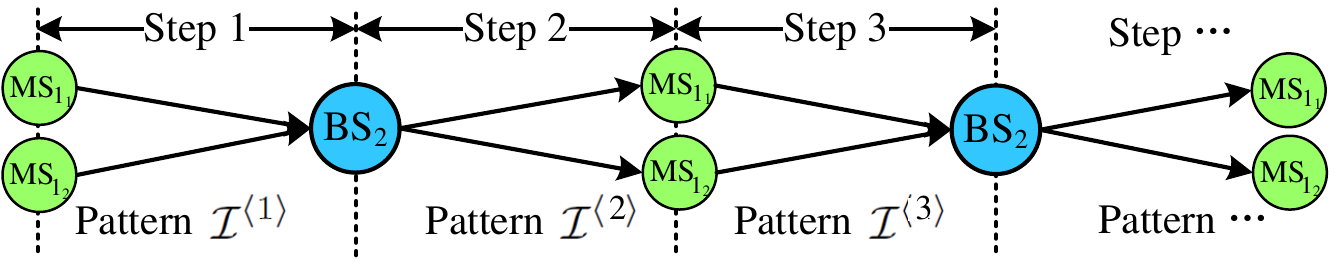}
\caption{A subspace chain of Ex. 1.} \label{fig:Chain_A}
\end{figure}

As shown in the figure, the connection pattern in each step is Pattern IV in Fig. \ref{fig:Diff_Connection}, i.e., BS$_2$ and MS$_{1_1}$, MS$_{1_2}$ mutually interfere each other. For BS$_2$, there are two interfering users and the ICIs generated by the two users
occupy $(N_{1_1}+N_{1_2})$-dimensional interference subspace at
BS$_2$. Since BS$_2$ has $M_{2}$ antennas, only
$\min\{(N_{1_1}+N_{1_2}),M_2\}$-dimensional subspace of the
interference subspace is resolvable at BS$_2$ and the remaining
$\left(N_{1_1}+N_{1_2}-M_{2}\right)^{+}$-dimensional subspace is
irresolvable. If $N_{1_1}+N_{1_2}-M_{2}\leq 0$, there does not exist
an irresolvable subspace so that the subspace chain stops. When
$N_{1_1}+N_{1_2}-M_{2}>0$, the irresolvable subspace at BS$_2$ is
nonempty and the subspace chain continues.

For MS$_{1_1}$ and MS$_{1_2}$, there is only one interfering BS and
the irresolvable ICIs occupy $(N_{1_1}+N_{1_2}-M_{2})$-dimensional
subspace. Since MS$_{1_1}$ and MS$_{1_1}$ have $N_{1_1}$ and
$N_{1_2}$ antennas to resolve $N_{1_1}$ and $N_{1_2}$-dimensional
subspace, the remaining $\left(N_{1_2}-M_{2}\right)^{+}$ and
$\left(N_{1_1}-M_{2}\right)^{+}$-dimensional subspaces are
irresolvable at MS$_{1_1}$ and MS$_{1_2}$, respectively.

We use $\mathcal{S}_{\alpha,j}^{\langle n \rangle}$ or $\mathcal{S}_{\alpha,i_k}^{\langle n \rangle}$ to denote BS$_j$ or MS$_{i_k}$'s irresolvable subspace in the $n$th step of Chain $\alpha,~\alpha=\mathrm{A},~\mathrm{B}$. For the subspace chain in Fig. \ref{fig:Chain_A}, the dimension of the irresolvable subspace at BS$_2$ in Step 1 is
\begin{align}\label{Eq:Dim_irr_space_BS}
  |\mathcal{S}_{\mathrm{A},2}^{\langle 1 \rangle}| =\left(N_{1_1}+N_{1_2}-M_{2}\right)^{+}
\end{align}
and the dimensions of the irresolvable subspaces at MS$_{1_1}$ and MS$_{1_2}$ in Step 2 are
\begin{subequations}
\begin{align}\label{Eq:Dim_irr_space_MS1}
  |\mathcal{S}_{\mathrm{A},1_1}^{\langle 2 \rangle}|& =\left(\mathcal{S}_{\mathrm{A},2}^{\langle 1 \rangle} -N_{1_1}\right)^{+}=\left(N_{1_2}-M_{2}\right)^{+}\\
  \label{Eq:Dim_irr_space_MS2}
  |\mathcal{S}_{\mathrm{A},1_2}^{\langle 2 \rangle}|& =\left(\mathcal{S}_{\mathrm{A},2}^{\langle 1 \rangle} -N_{1_2}\right)^{+}=\left(N_{1_1}-M_{2}\right)^{+}
\end{align}
\end{subequations}

If the considered connection pattern of the subspace chain in Fig. \ref{fig:Chain_A} is Pattern III but not Pattern IV, the dimension of
the irresolvable subspace at BS$_2$ in Step 1 becomes
$\mathcal{S}_{\mathrm{A},2}^{\langle 1 \rangle}
=\left(N_{1_2}-M_{2}\right)^{+}$. It indicates that the dimension of
irresolvable subspace also depends on the connection pattern.

To describe the all connection patterns in the subspace chain, we introduce the fourth terminology.
\begin{definition}
\emph{Connection pattern chain} is a chain constituted by the connection patterns
in different steps of a subspace chain, denoted as ${{\cal I}^{\langle 1
\rangle}}\leftrightarrow{{\cal I}^{\langle 2
\rangle}}\leftrightarrow\cdots$, where ${{\cal I}^{\langle n
\rangle}}$ is the $n$th pattern in the chain. If a connection pattern chain satisfies ${{\cal I}^{\langle n \rangle}}= {\cal I},~\forall n$, it is \emph{equally connected pattern chain} (ECPC), otherwise it is \emph{unequally connected pattern chain} (UCPC). If an ECPC satisfies ${{\cal I}^{\langle n \rangle}}= {\cal J},~\forall n$, it is a \emph{full ECPC}, otherwise it is a \emph{partial ECPC}.
\end{definition}

As shown in Fig. \ref{fig:Chain_A}, the connection pattern chain is Pattern IV$\leftrightarrow$Pattern IV$\leftrightarrow\cdots$, which is an ECPC.

Since all BSs or users have different number of antennas, the different BS or user nodes have different dimensions of irresolvable subspace. We know that in one step, the irresolvable subspace at some nodes may become empty but that at other nodes is nonempty.
Consequently, only the nonempty subspaces appear in the next step
of the subspace chain. As a result, the connection pattern for an
arbitrary asymmetric MIMO-IBC in the next step is always a subsect
of that in the current step, i.e.,
\begin{align}\label{Eq:CP_Chain_Propety}
{{\cal I}^{\langle n \rangle}} \supseteq {{\cal I}^{\langle n+1 \rangle}},~\forall n\geq1
\end{align}

From the above analysis, we know that to derive the necessary condition for asymmetric systems, all possible connection pattern chains satisfying \eqref{Eq:CP_Chain_Propety} need to be taken into account to construct the genie chain. In \cite{TTTSP2013b}, only the fully ECPU is investigated in deriving the genie chain so that the obtained genie chain is also a special case of the genie chain considered for asymmetric systems. In other word, by introducing a notion of connection pattern chain, we generalize the genie chain in \cite{TTTSP2013b} into asymmetric MIMO-IBC.

Following the similar way in \cite{TTTSP2013b} and considering the
difference mentioned above, for an arbitrary connection pattern
chain satisfying \eqref{Eq:CP_Chain_Propety}, we can obtain the
dimension of irresolvable subspace in each step for asymmetric
MIMO-IBC. To express the dimension of irresolvable subspace in a
unified way, we define $| {\cal S}_{\mathrm{A},i_k}^{\langle 0 \rangle}
|=| {\cal S}_{\mathrm{B},i_k}^{\langle -1\rangle}
|\triangleq N_{i_k},~| {\cal S}_{\mathrm{A},j}^{\langle -1\rangle}
|=| {\cal S}_{\mathrm{B},j}^{\langle 0 \rangle}
|\triangleq M_j$. Then, the dimension of irresolvable subspace in
each step of Chains A and B can be expressed as
\begin{subequations}\label{Eq:Chain_Subspace_A}
\begin{align}
&\left\{
\begin{array}{l}
| {\cal S}_{{\rm{A}},j}^{\langle 2n + 1\rangle } | = \left( \sum\nolimits_{i_k \in {\cal I}_{j}^{\langle 2n+1\rangle }} | {\cal S}_{{\rm{A}},{i_k}}^{\langle 2n\rangle } | - | {\cal S}_{{\rm{A}},j}^{\langle 2n - 1\rangle } |  \right)^ + \\
| {\cal S}_{{\rm{A}},i_k}^{\langle 2n + 2\rangle} | = \left( \sum\nolimits_{j \in {\cal I}_{i_k}^{\langle 2n+2\rangle }}
| {\cal S}_{{\rm{A}},j}^{\langle 2n+1\rangle } | - | {\cal S}_{{\rm{A}},i_k}^{\langle 2n \rangle } |  \right)^ +\\
\end{array}
\right.\\
&\left\{
\begin{array}{l}
| {\cal S}_{{\rm{B}},i_k}^{\langle 2n + 1\rangle } | = \left( \sum\nolimits_{j \in {\cal I}_{i_k}^{\langle 2n+1\rangle }} | {\cal S}_{{\rm{B}},{j}}^{\langle 2n\rangle } | - | {\cal S}_{{\rm{B}},i_k}^{\langle 2n - 1\rangle } |  \right)^ + \\
| {\cal S}_{{\rm{B}},j}^{\langle 2n + 2\rangle} | = \left( \sum\nolimits_{i_k \in {\cal I}_{j}^{\langle 2n+2\rangle }}
| {\cal S}_{{\rm{B}},i_k}^{\langle 2n+1\rangle } | - | {\cal S}_{{\rm{B}},j}^{\langle 2n \rangle } |  \right)^ +\\
\end{array}
\right.
\end{align}
\end{subequations}
$\forall 0\leq n \leq n_{\max}$, where ${\cal I}_{j}^{\langle n
\rangle} \triangleq \left\{ {i_k}|\left( {{i_k},j} \right) \in
{{\cal I}^{\langle n \rangle}} \right\}$ is the set of MSs' index
who are connected with tBS$_j$ in the connection pattern ${{\cal
I}^{\langle n \rangle}}$, ${\cal I}^{\langle n \rangle}_{i_k}
\triangleq \left\{ j|\left( {{i_k},j} \right) \in {{\cal I}^{\langle
n \rangle}} \right\}$ is the set of BSs' index who are connected
with MS$_{i_k}$ in ${{\cal I}^{\langle n \rangle}}$, $n_{\max}$
reflects the maximal length of subspace chain and satisfies $|{{\cal
S}^{\langle 2n_{\max}+1\rangle}_{\alpha,j}} | = 0,\forall
(i_k,j) \in {\cal I}^{\langle 2n_{\max}+1\rangle}$ or $|{{\cal
S}^{\langle 2n_{\max}+2\rangle}_{\alpha,i_k}} | = 0,\forall
(i_k,j) \in {\cal I}^{\langle 2n_{\max}+2\rangle}$, $\alpha=\mathrm{A},~\mathrm{B}$.

\subsubsection{Genie Chain}
From the dimension of the irresolvable subspace in
\eqref{Eq:Chain_Subspace_A}, we can determine the corresponding
dimension of the genie and then obtain the irreducible condition.

We use $\mathcal{G}_{\alpha,j}^{\langle n \rangle}$ or
$\mathcal{G}_{\alpha,i_k}^{\langle n \rangle}$ to denote the
introduced genie at BS$_j$ or MS$_{i_k}$ in the $n$th step of Chain
$\alpha,~\alpha=\mathrm{A},~\mathrm{B}$. In Step 1, if the irresolvable subspace at BS$_2$
$\mathcal{S}_{\mathrm{A},2}^{\langle 1 \rangle}$ in
\eqref{Eq:Dim_irr_space_BS} is nonempty, we can introduce a genie in
$\mathcal{S}_{\mathrm{A},2}^{\langle 1 \rangle}$ (denoted as
$\mathcal{G}_{\mathrm{A},2}^{\langle 1 \rangle}$) to help BS$_2$ to
resolve all ICIs in Step 1. Then, we have
\begin{align}\label{Eq:DoF_Genie_Bound}
  d_{1_1} + d_{1_2} \leq (M_2-d_{2}) +|\mathcal{G}_{\mathrm{A},2}^{\langle 1 \rangle}|
\end{align}

If $\mathcal{S}_{\mathrm{A},1_1}^{\langle 2 \rangle}$ and $\mathcal{S}_{\mathrm{A},1_2}^{\langle 2 \rangle}$ in  \eqref{Eq:Dim_irr_space_MS1} and \eqref{Eq:Dim_irr_space_MS2} are nonempty, we can introduce two genies in $\mathcal{S}_{\mathrm{A},1_1}^{\langle 2 \rangle}$ and $\mathcal{S}_{\mathrm{A},1_2}^{\langle 2 \rangle}$ (denoted as $\mathcal{G}_{\mathrm{A},1_1}^{\langle 2 \rangle}$ and $\mathcal{G}_{\mathrm{A},1_2}^{\langle 2 \rangle}$) to help MS$_{1_1}$ and MS$_{1_2}$ to resolve the remaining ICIs in Step 2. Then, we have
\begin{align}\label{Eq:DoF_Genie_Bound_MS1}
  |\mathcal{G}_{\mathrm{A},2}^{\langle 1 \rangle}| \leq d_{1_1} +|\mathcal{G}_{\mathrm{A},1_1}^{\langle 2 \rangle}|,~
  |\mathcal{G}_{\mathrm{A},2}^{\langle 1 \rangle}| \leq d_{1_2} +|\mathcal{G}_{\mathrm{A},1_2}^{\langle 2 \rangle}|
\end{align}


Following the same way, we can obtain the dimension of the genie for
the general asymmetric MIMO-IBC. To describe the genie's dimension
in a unified way, we define $| {{\cal G}^{\langle 0
\rangle}_{\mathrm{A},i_k}} | \triangleq  d_{i_k},~| {{\cal
G}^{\langle - 1\rangle}_{\mathrm{A},j}} | \triangleq  M_j - d_j$, $| {{\cal G}^{\langle 0
\rangle}_{\mathrm{B},j}} | \triangleq  d_{j},~| {{\cal
G}^{\langle - 1\rangle}_{\mathrm{B},i_k}} | \triangleq  N_{i_k} - d_{i_k}$.
Then, the dimension of the genie in each step satisfies
\begin{subequations}\label{Eq:Chain_Genie_A}
\begin{align}
&\left\{
\begin{array}{l}
 \sum\nolimits_{{i_k} \in {\cal I}^{\langle 2n+1 \rangle}_{j}} {| {{\cal G}^{ \langle 2n \rangle}_{\mathrm{A},i_k}} |}  \le | {{\cal G}^{\langle 2n - 1 \rangle}_{\mathrm{A},j}} | + | {{\cal G}^{\langle 2n + 1 \rangle}_{\mathrm{A},j}} | \\
\sum\nolimits_{j \in {\cal I}^{\langle 2n + 2 \rangle}_{{i_k}}} {| {{\cal G}^{\langle 2n + 1 \rangle}_{\mathrm{A},j}} |}  \le  | {{\cal G}^{\langle 2n \rangle}_{\mathrm{A},i_k}} | + | {{\cal G}^{\langle 2n + 2 \rangle}_{\mathrm{A},i_k}} |\\
\end{array}
\right.\\
&\left\{
\begin{array}{l}
\sum\nolimits_{{j} \in {\cal I}^{\langle 2n+1 \rangle}_{i_k}} {| {{\cal G}^{ \langle 2n \rangle}_{\mathrm{A},j}} |}  \le | {{\cal G}^{\langle 2n - 1 \rangle}_{\mathrm{A},i_k}} | + | {{\cal G}^{\langle 2n + 1 \rangle}_{\mathrm{A},i_k}} | \\
\sum\nolimits_{i_k \in {\cal I}^{\langle 2n + 2 \rangle}_{{j}}} {| {{\cal G}^{\langle 2n + 1 \rangle}_{\mathrm{A},i_k}} |}  \le  | {{\cal G}^{\langle 2n \rangle}_{\mathrm{A},j}} | + | {{\cal G}^{\langle 2n + 2 \rangle}_{\mathrm{A},j}} |\\
\end{array}
\right.
\end{align}
\end{subequations}
$\forall 0\leq n \leq n_{\max}$.

Moreover, since the genie in each step lie in its corresponding
irresolvable subspace, the dimension of  the genie does not exceed
that of the irresolvable subspace, i.e.,
\begin{subequations}\label{Eq:Chain_Genie_A1}
\begin{align}
&| {{\cal G}^{\langle 2n - 1 \rangle}_{\mathrm{A},j}} | \le | {{\cal S}^{\langle 2n - 1 \rangle}_{\mathrm{A},j}} |,~| {{\cal G}^{\langle 2n \rangle}_{\mathrm{A},i_k}} | \le | {{\cal S}^{\langle 2n \rangle}_{\mathrm{A},i_k}} |\\
&| {{\cal G}^{\langle 2n - 1 \rangle}_{\mathrm{B},i_k}} | \le | {{\cal S}^{\langle 2n - 1 \rangle}_{\mathrm{B},i_k}} |,~| {{\cal G}^{\langle 2n \rangle}_{\mathrm{B},j}} | \le | {{\cal S}^{\langle 2n \rangle}_{\mathrm{B},j}} |
\end{align}
\end{subequations}
$\forall 0\leq n \leq n_{\max}$.

Combining the inequalities in \eqref{Eq:Chain_Genie_A} and
\eqref{Eq:Chain_Genie_A1}, we obtain the whole \emph{irreducible condition} for asymmetric
MIMO-IBC.

\begin{remark}
Since $| {{\cal S}^{\langle 0 \rangle}_{\mathrm{A},i_k}} | =
N_{i_k}$ and $| {{\cal G}^{\langle 0 \rangle}_{\mathrm{A},i_k}} | =
d_{i_k}$, $| {{\cal S}^{\langle 0 \rangle}_{\mathrm{B},j}} | =
N_{j}$ and $| {{\cal G}^{\langle 0 \rangle}_{\mathrm{B},j}} | =
d_{j}$, from \eqref{Eq:Chain_Genie_A1} we have $d_{i_k}\leq
N_{i_k}$ and $d_j \leq M_j$. As a result, the irreducible
condition contains the condition that ensuring each BS or MS with
enough antennas to convey the desired signals.
\end{remark}

\subsection{ Comparison of Two Necessary Conditions}

The irreducible condition is one necessary condition for both linear IA and asymmetric IA feasibility, while the proper condition is
necessary for linear IA feasibility but no necessary for asymmetric IA feasibility.
Therefore, the information theoretical outer-bound of DoF region can
be derived from the irreducible condition, while the outer-bound of
DoF region achieved by linear IA needs to be derived from both the
irreducible condition and the proper condition.

\begin{remark}
If the configuration of a MIMO-IBC satisfies the proper condition but
not the reducible condition, the system is proper but infeasible for
linear IA.
\end{remark}

Since it is very difficult to obtain a general result for arbitrary
configurations of asymmetric MIMO-IBC, in the following we use an
example to show how to obtain the information theoretical outer-bound
from the irreducible condition. The outer-bound achieved by linear IA can be
obtained in a similar way.

Consider Ex.1 again, if the configuration satisfies
$|\mathcal{S}_{\mathrm{A},2}^{\langle 1 \rangle}|>0$,
$|\mathcal{S}_{\mathrm{A},1_1}^{\langle 2 \rangle}|=0$ and
$|\mathcal{S}_{\mathrm{A},1_2}^{\langle 2 \rangle}|=0$, i.e.,
$N_{1_1}+N_{1_2}> M_{2} \geq \max\{N_{1_1},N_{1_2}\}$, from
\eqref{Eq:Dim_irr_space_BS} we can introduce a genie to help BS$_2$,
but cannot introduce any genie to help MS$_{1_1}$ and MS$_{1_2}$.
Substituting into \eqref{Eq:Chain_Genie_A} and
\eqref{Eq:Chain_Genie_A1}, the reducible condition is
\begin{align}\label{Eq:Chain_Genie_A_Ex}
\left\{
\begin{array}{l}
d_{1_1} \leq N_{1_1}, d_{1_2} \leq N_{1_2}, d_{2} \leq N_{2} \\
d_{1_1}+d_{1_2} + d_{2} \leq M_2 + |\mathcal{G}_{\mathrm{A},2}^{\langle 1 \rangle}| \\
|\mathcal{G}_{\mathrm{A},2}^{\langle 1 \rangle}| \leq d_{1_1},~  |\mathcal{G}_{\mathrm{A},2}^{\langle 1 \rangle}| \leq d_{1_2}\\
|\mathcal{G}_{\mathrm{A},2}^{\langle 1 \rangle}| \leq |\mathcal{S}_{\mathrm{A},2}^{\langle 1 \rangle}|=N_{1_1}+N_{1_2} - M_2\\
\end{array}
\right.
\end{align}
By solving \eqref{Eq:Chain_Genie_A_Ex}, the outer-bound of DoF region is obtained as
\begin{align}
\left\{
\begin{array}{l}
  d_{1_1} \leq N_{1_1}, d_{1_2} \leq N_{1_2}, d_{2} \leq N_{2} \\
  d_{1_1}+ d_{2} \leq M_2, d_{1_2}+ d_{2} \leq M_2 \\
  d_{1_1}+ d_{1_2}+ d_{2} \leq N_{1_1}+N_{1_2}
\end{array}
\right.
\end{align}

In the chain shown in Fig. \ref{fig:Chain_A}, we only consider the connection pattern chain where ${{\cal I}^{\langle n \rangle}}$ is Pattern IV, $\forall n$. If we consider other connection pattern chain, we can obtain the other outer-bound of DoF region in a similar way.


\section{DoF per user for Two-cell Asymmetric MIMO-IBC}
Because the proper condition and the irreducible condition for general asymmetric
MIMO-IBC are too complicated for analysis, we consider a class of special
configurations to obtain the closed-form expression of the DoF. We
consider that all users have the same number of data streams, i.e.,
$d_{i_k}=d,~\forall k,i$, then the outer-bound of DoF region can be
characterized by the upper-bound of DoF per user. Besides, we
consider that all users in one cell have the same number of receive
antennas, i.e., $N_{i_k}=N_i,~\forall k$, then the DoF of these
users can be analyzed in a unified way.

For such a special configuration, we can derive the closed-form DoF
upper-bound per user for two-cell asymmetric MIMO-IBC and prove that
it is the upper-bound of DoF per user for $G$-cell asymmetric
MIMO-IBC. Note that such a two-cell asymmetric MIMO-IBC represents a
typical heterogeneous network where a macro-cell and a micro-cell
interfere each other, so that its DoF results can provide useful
insights into interference management in heterogeneous networks. In
the following, we investigate the two-cell MIMO-IBC, denoted as
$\prod_{i=1}^{2}(M_i\times (N_i,d)^{K_i})$.

\subsection{Information Theoretic Maximal DoF}
To understand the potential of the two-cell asymmetric MIMO-IBC, we
first investigate the information theoretic maximal DoF per user.
\begin{theorem}[Information Theoretic Maximal DoF]\label{Theorem:DoF_Upperbound}
For a  two-cell MIMO-IBC $\prod_{i=1}^{2}(M_i\times (N_i,d)^{K_i})$, the information theoretic maximal DoF per user is
\begin{align}\label{Eq:DoF_Info_ALL}
\min_{i\neq j}\left\{d^{\mathrm{Info}}\left(M_j,N_i,K_j,K_i\right)\right\}
\end{align}
where
\begin{align*}
  d^{\mathrm{Info}}\left(M_j,N_i,K_j,K_i\right)
\triangleq
   \left\{
  \begin{array}{ll}
    d^{\mathrm{Decom}}(M_j,N_i,K_j),&\mathrm{Region~I}\\
    d^{\mathrm{Quan}}(M_j,N_i,K_j,K_i),&\mathrm{Region~II}\\
   \end{array}
  \right.
\end{align*}
\begin{align}\label{Eq:Decompositon DoF}
  d^{\mathrm{Decom}}(M_j,N_i,K_j)\triangleq\frac{{{M_j}{N_i}}}{{{M_j} + {K_j}{N_i}}}
\end{align}
\begin{align}\label{Eq:Quantity DoF}
&d^{\mathrm{Quan}}\left( {M_j},{N_i},K_j,k \right)\\
\triangleq &\left\{
\begin{array}{ll}
\min \left\{ {\frac{{{M_j}}}{{{K_j} + C_n^{\mathrm{A}}\left( k \right)}},\frac{{{N_i}}}{{1 + \frac{{{K_j}}}{{C_{n - 1}^{\mathrm{A}}\left( k \right)}}}}} \right\},&
\forall C_n^{\mathrm{A}}\left( k \right) \le \frac{M_j}{N_i} < C_{n - 1}^{\mathrm{A}}\left( k \right)\\
\min \left\{ {\frac{{{M_j}}}{{{K_j} + C_{n - 1}^{\mathrm{B}}\left( k \right)}},\frac{{{N_i}}}{{1 + \frac{{{K_j}}}{{C_n^{\mathrm{B}}\left( k \right)}}}}} \right\},&
\forall C_{n - 1}^{\mathrm{B}}\left( k \right) < \frac{M_j}{N_i} \le C_n^{\mathrm{B}}\left( k \right)
\end{array} \right.\nonumber
\end{align}
\begin{align*}
\begin{array}{ll}
  \mathrm{Region~I:} & C_{\infty} ^{\mathrm{B}}\left( {{K_i}} \right) <  \frac{M_j}{N_i}  <  C_{\infty} ^{\mathrm{A}}\left( {{K_i}} \right),~~\forall K_i\geq 4\\
  \mathrm{Region~II:} &
  \left\{ \begin{array}{ll}
            \mathrm{Arbitrary~}M_j,N_i,  & \forall K_i\leq 3 \\
            \frac{M_j}{N_i}\geq  C_{\infty} ^{\mathrm{A}}\left( {{K_i}} \right) ~\mathrm{or}~\frac{M_j}{N_i} \leq C_{\infty} ^{\mathrm{B}}\left( {{K_i}} \right), & \forall K_i\geq 4
          \end{array}
  \right.
\end{array}
\end{align*}
$C_n^{\mathrm{A}}\left( k \right) \triangleq {k} - {k}/{{C_{n - 1}^{\mathrm{A}}\left( k \right)}},~
C_n^{\mathrm{B}}\left( k \right) \triangleq {k}/({k} - C_{n - 1}^{\mathrm{B}}( k))$, $C_{0}^{\mathrm{A}}=\infty$, $C_{0}^{\mathrm{B}}=0$, and $C_{\infty} ^{\mathrm{\alpha}}(K_i ) \triangleq \lim \limits_{n \to \infty } C_n^{\mathrm{\alpha}}(K_i )$, $\alpha =\mathrm{A},~\mathrm{B}$.
\end{theorem}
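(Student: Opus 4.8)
The plan is to prove \eqref{Eq:DoF_Info_ALL} in two halves: a converse (outer bound) built from the irreducible condition of Section~II, and a matching achievability result. For the converse I would specialize the genie chain to the configuration $\prod_{i=1}^{2}(M_i\times(N_i,d)^{K_i})$. Because every user in cell $i$ has $N_i$ antennas and $d$ streams, the vector recursions \eqref{Eq:Chain_Subspace_A} and \eqref{Eq:Chain_Genie_A} collapse by symmetry to scalar ones. Writing $b_n$ for the BS-side genie dimension at the odd steps and $u_n$ for the user-side genie dimension at the even steps of Chain~A, and using that in a two-cell IBC each user sees a single interfering BS while $\mathrm{BS}_j$ sees all $K_i$ interferers, \eqref{Eq:Chain_Genie_A} becomes $K_i u_n\le b_n+b_{n+1}$ and $b_{n+1}\le u_n+u_{n+1}$, with $b_0=M_j-K_jd$ and $u_0=d$ from the stated boundary values; the subspace ceilings \eqref{Eq:Chain_Genie_A1} reduce to $b_n\le S^b_n$, $u_n\le S^u_n$ where $S^b_{n+1}=(K_iS^u_n-S^b_n)^+$, $S^u_{n+1}=(S^b_{n+1}-S^u_n)^+$, $S^b_0=M_j$, $S^u_0=N_i$. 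Chain~B gives the dual scalar system.

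The second step is to read off the chain length. A direct induction shows $S^b_n>0$ exactly when $M_j/N_i<C_{2n-1}^{\mathrm{A}}(K_i)$ and $S^u_n>0$ when $M_j/N_i<C_{2n}^{\mathrm{A}}(K_i)$, where $C_n^{\mathrm{A}}$ obeys precisely $C_n^{\mathrm{A}}=K_i-K_i/C_{n-1}^{\mathrm{A}}$; the dual computation produces the increasing thresholds $C_n^{\mathrm{B}}$. Both limits solve $C^2-K_iC+K_i=0$, whose roots $C_\infty^{\mathrm{B}},C_\infty^{\mathrm{A}}$ are real only for $K_i\ge4$, which is exactly what separates Region~I ($C_\infty^{\mathrm{B}}<M_j/N_i<C_\infty^{\mathrm{A}}$, both chains infinite) from Region~II (some chain terminating, automatic when $K_i\le3$). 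For $d$ to admit a feasible scheme the system \eqref{Eq:Chain_Genie_A}--\eqref{Eq:Chain_Genie_A1} must be consistent in the genie variables; eliminating them by Fourier--Motzkin (telescoping forward from $b_0,u_0$ and imposing the ceiling at the terminating step) yields for Region~II the two competing bounds $d\le M_j/(K_j+C_n^{\mathrm{A}})$ and $d\le N_i/(1+K_j/C_{n-1}^{\mathrm{A}})$ that define $d^{\mathrm{Quan}}$, and for Region~I the infinite-chain (fixed-point) limit $d\le d^{\mathrm{Decom}}=M_jN_i/(M_j+K_jN_i)$. One checks the two forms agree at the boundary $M_j/N_i=C_\infty^{\mathrm{A}}$, confirming continuity. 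Taking the tightest bound over Chains~A and~B and over the ordered pair $(i,j)$, $i\ne j$, gives the outer bound $\min_{i\ne j}d^{\mathrm{Info}}$.

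The third step is achievability. In Region~I I would use asymptotic IA over a long symbol extension, generalizing the eigenvector-based alignment of \cite{TTTSP2013b,Cadambe2008b}: over $T$ extensions $\mathrm{BS}_j$ carries $K_jdT$ desired dimensions while the $K_i$ interferers are asymptotically aligned into a common $(M_j-K_jd)T$-dimensional subspace at each receiver, and $T\to\infty$ drives the per-user DoF to $d^{\mathrm{Decom}}$. In Region~II I would give an explicit \emph{linear} IA scheme with finite spatial extension whose transmit and receive subspace dimensions are chosen to saturate the genie-chain inequalities at the terminating step $n$; since the channels are generic, a dimension count together with a linear-independence (genericity) argument certifies that the aligned and desired subspaces intersect trivially, so $d^{\mathrm{Quan}}$ is attained.

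The main obstacle I anticipate is the converse bookkeeping: among all connection pattern chains satisfying \eqref{Eq:CP_Chain_Propety} one must verify that the full ECPC used above is the binding one, and that the Fourier--Motzkin elimination is tight, i.e.\ the lower bounds forced by \eqref{Eq:Chain_Genie_A} never overshoot the ceilings \eqref{Eq:Chain_Genie_A1} before the chain terminates, so that $d^{\mathrm{Quan}}$ and $d^{\mathrm{Decom}}$ are not merely necessary but exactly matched by the schemes above. Handling the boundary ratios where the chain length changes, and confirming that the $\min$ over $(i,j)$ and over Chains~A and~B collapses to the single expression \eqref{Eq:DoF_Info_ALL}, is the delicate part of the argument.
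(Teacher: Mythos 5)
Your Region~II converse and your overall two-part architecture (outer bound from the irreducible condition, then achievability) do follow the paper: there, too, the quantity DoF bound is obtained from the genie chain with the full ECPC, by derivations parallel to \cite{TTTSP2013b}. The genuine gap is your Region~I converse. You propose to obtain $d\le d^{\mathrm{Decom}}$ as the ``infinite-chain (fixed-point) limit'' of the same full-ECPC scalar system. This step fails. In Region~I one has $M_j/N_i<C_{\infty}^{\mathrm{A}}(K_i)\le C_n^{\mathrm{A}}(K_i)$ for every finite $n$ (and dually for Chain~B), so the termination condition $C_n^{\mathrm{A}}(K_i)\le M_j/N_i<C_{n-1}^{\mathrm{A}}(K_i)$ under which the bound $d\le M_j/(K_j+C_n^{\mathrm{A}})$ is derived holds for \emph{no} $n$: the ceilings \eqref{Eq:Chain_Genie_A1} never reach zero, your Fourier--Motzkin elimination never closes, and no finite bound on $d$ materializes from the full ECPC. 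Nor can you simply pass to the limit in the formulas: in the interior of Region~I, $M_j/(K_j+C_{\infty}^{\mathrm{A}})<d^{\mathrm{Decom}}=M_jN_i/(M_j+K_jN_i)<N_iC_{\infty}^{\mathrm{A}}/(K_j+C_{\infty}^{\mathrm{A}})$ (both inequalities are equivalent to $M_j<N_iC_{\infty}^{\mathrm{A}}$), so the limits of the two quantity-bound branches strictly straddle $d^{\mathrm{Decom}}$; the smaller one, if it were a valid outer bound, would even contradict the achievability half of the theorem. In short, the full ECPC is provably \emph{not} the binding chain in Region~I --- the opposite of what the last paragraph of your proposal hopes to verify.

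This is precisely where the paper's proof departs from yours and where its main novelty lies: for Region~I it abandons the full ECPC and works with partial ECPCs and unequally connected pattern chains (UCPCs), i.e., chains ${\cal I}^{\langle 1\rangle}\supseteq{\cal I}^{\langle 2\rangle}\supseteq\cdots$ satisfying \eqref{Eq:CP_Chain_Propety} in which some interfering links are dropped at some steps. Dropping links alters the subspace recursion \eqref{Eq:Chain_Subspace_A} so that the chain can terminate even when $C_{\infty}^{\mathrm{B}}(K_i)<M_j/N_i<C_{\infty}^{\mathrm{A}}(K_i)$, and the paper proves --- using properties of generalized continued fractions and generalized Fibonacci sequence-pairs \cite{Continued_Fractions1980} --- that for every antenna configuration in Region~I there exists a connection pattern chain whose terminating bound equals exactly $d^{\mathrm{Decom}}$. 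Your proposal never invokes any chain other than the full ECPC, so it cannot produce this half of the converse. A secondary imprecision: linear IA attains the bound only when both $M_1/N_2$ and $M_2/N_1$ lie in Region~II; in mixed configurations the paper resorts to asymptotic IA, which suffices for this information-theoretic statement but should be stated as such rather than claiming a linear scheme throughout Region~II.
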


When the system reduces to a symmetric two-cell MIMO-IBC, the DoF bounds in \eqref{Eq:Decompositon DoF} and \eqref{Eq:Quantity DoF} reduce to the \emph{decomposition DoF bound} and \emph{quantity DoF bound} in \cite{TTTSP2013b}, respectively. Therefore, we also call \eqref{Eq:Decompositon DoF} and \eqref{Eq:Quantity DoF} as \emph{decomposition DoF bound} and \emph{quantity DoF bound}.

Due to the lack of space, we only provide the skeleton of proofs for
all
theorems.

\begin{proof}[Proof Skeleton]
We first prove that \eqref{Eq:DoF_Info_ALL} is the information theoretic DoF upper-bound and then prove that it is achievable.

For the conclusion that the quantity DoF bound is the information
theoretic upper-bound in Region II, a rigorous proof was provided in
\cite{TTTSP2013b} for symmetric MIMO-IBC. Using similar derivations,
the quantity DoF bound can be derived from the irreducible condition
where the full ECPC is considered, which is the information
theoretic DoF upper-bound.

For the conclusion that the decomposition DoF bound is the
information theoretic upper-bound in Region I, there is no rigorous
proof in existing studies. From the analysis in last section, we
know that the subspace chain depends on the connection pattern
chain. This means that for different connection pattern chains, we
can obtain different DoF upper-bounds from the irreducible
condition. Moreover, we show that when considering the full
ECPC, the DoF upper-bound derived from the irreducible condition is
only applicable for the antenna configurations in Region II.
However, if considering the partial ECPCs or UCPUs, the derived DoF
upper-bound is applicable for the antenna configurations in Region I. With more derivations, we find that for
some antenna configurations in Region I, the derived DoF upper-bound
is equal to the decomposition DoF bound. It means that the
decomposition DoF bound is the DoF upper-bound for these antenna
configurations.

To help understand the proof, we illustrate the result by an example
in Fig. \ref{fig:IA_Feasible3}. From the boundary of feasible
region, the information theoretic DoF upper-bound is shown. From the
boundaries of infeasible regions, the DoF upper-bounds derived from
the irreducible condition are shown. When considering the full ECPC,
the DoF upper-bound is obtained only for the antenna configurations
in Region II. However, when considering a given partial ECPC, the
DoF upper-bound is obtained for antenna configurations in Region I,
which is equal to the decomposition DoF bound for some
configurations.
\begin{figure}[htb!]
\centering
\includegraphics[width=0.9\linewidth]{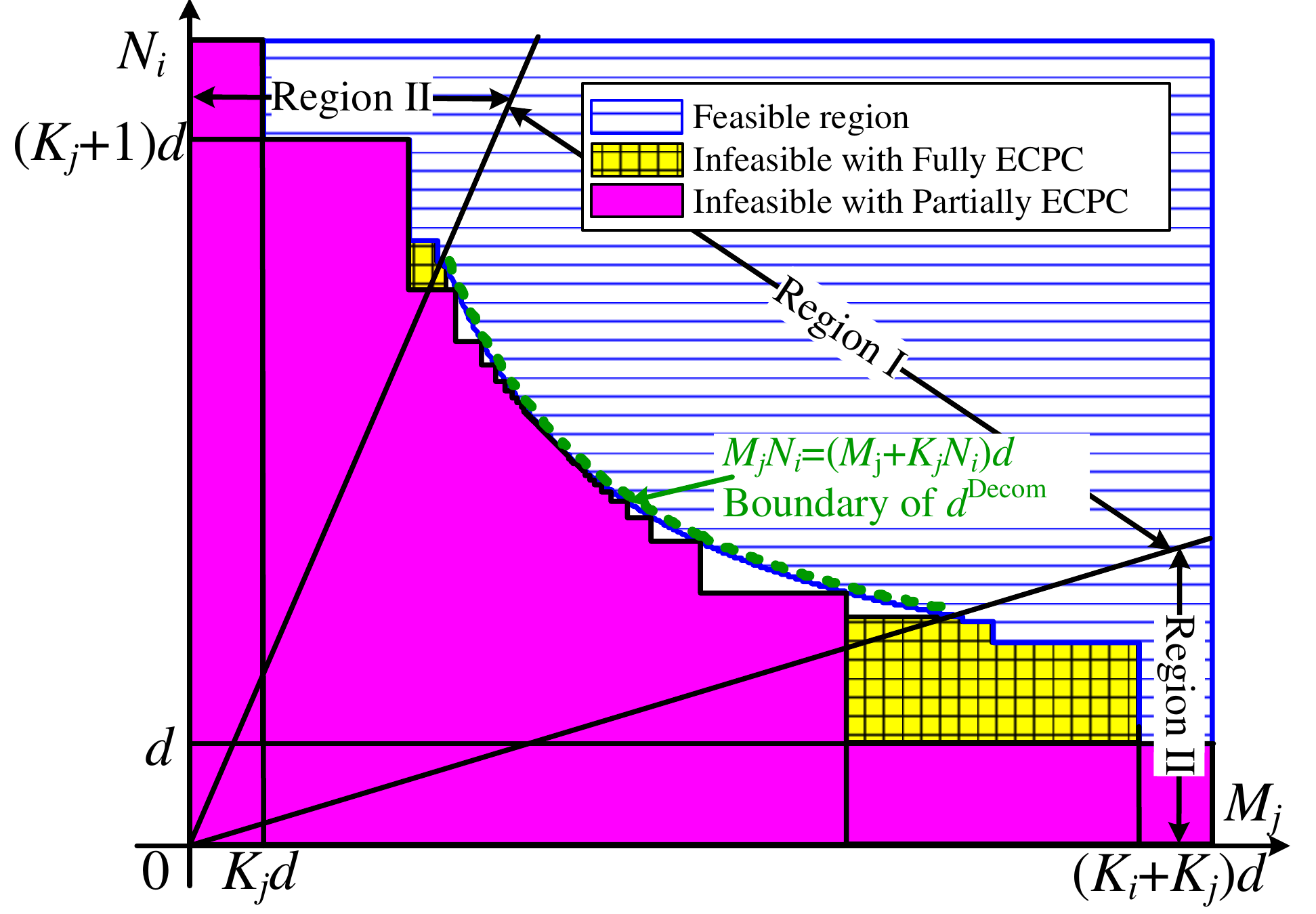}
\caption{Feasible and infeasible regions for a two-cell MIMO-IBC $\prod_{i=1}^{2}(M_i\times (N_i,d)^{K_i})$.}
\label{fig:IA_Feasible3}
\end{figure}

According to the properties of the
generalized continue fraction and generalized Fibonacci
sequence-pairs in \cite{Continued_Fractions1980}, we prove that for
an arbitrary antenna configuration in Region I, there always exists
a connected pattern chain ensuring that the DoF upper-bound obtained
from the reducible condition is equal to the decomposition DoF
bound. As a result, this proves that the decomposition DoF bound is
the information theoretic DoF upper-bound in Region I.

In the following, we prove the DoF upper-bound is achievable.
Following similar derivations in \cite{TTTSP2013b}, we can show that
for the antennas configuration when both $M_1/N_2$ and $M_2/N_1$
fall in Region II, the DoF upper-bound can be achieved by linear
IA and the closed-form solution of linear IA exists. Obviously, the
DoF can also be achieved by asymmetric IA but the infinite
extension is not necessary.  For other configurations, the DoF
upper-bound can only be achieved by asymptotic IA but not by
linear IA.
\end{proof}

\begin{remark}
For the considered two-cell MIMO-IBC, the information theoretic DoF
upper-bound for arbitrary configurations can be obtained from the irreducible condition
and can always be achieved by asymptotic IA. Therefore, the irreducible
condition is the sufficient and necessary condition for asymptotic
IA feasibility.
\end{remark}

In \cite{TTTSP2013b}, the irreducible condition is obtained as the sufficient and necessary condition of asymptotic IA feasibility for the antenna configurations in Region II and what is the condition for the antenna configurations in Region I is still unknown. In this study, we prove that the irreducible condition obtained from the generalized genie chain is the sufficient and necessary condition of asymptotic IA feasibility for arbitrary antenna configurations.

\subsection{Maximal Achievable DoF of Linear IA}
Considering that asymptotic IA requires infinite time/frequency
extension, which is not feasible for practical systems, this
motivates us to find the maximal DoF per user achieved by linear IA.
\begin{theorem}[Maximal  DoF achieved by Linear IA]\label{Theorem:Linear_DoF_Upperbound}
For a two-cell MIMO-IBC $\prod_{i=1}^{2}(M_i\times (N_i,d)^{K_i})$, the maximal DoF per user achieved by linear IA is
\begin{align}\label{Eq:Proper_DoF_ALL}
\min_{i\neq j} \left\{d^{\mathrm{Prop}}\left(M_j,N_i,K_j,K_i  \right),d^{\mathrm{Info}}\left(M_j,N_i,K_j,K_i \right)
  \right\}
\end{align}
where
\begin{align}\label{Eq:Proper DoF}
  d^{\mathrm{Prop}}\left({{M_j},{N_i}},K_j,K_i \right) = \frac{{{K_j}{M_j} + {K_i}{N_i}}}{{K_j^2 + {K_j}{K_i} + {K_i}}}
\end{align}
\end{theorem}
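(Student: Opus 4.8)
The plan is to establish the expression in \eqref{Eq:Proper_DoF_ALL} by combining two bounds: the already-proven information-theoretic bound $d^{\mathrm{Info}}$ from Theorem~\ref{Theorem:DoF_Upperbound}, which is valid for all IA schemes including linear IA, and a new bound $d^{\mathrm{Prop}}$ derived specifically from the proper condition \eqref{Eq:Proper_Condition_Asym}, which is necessary for linear IA feasibility. Since any linear-IA-feasible configuration must simultaneously satisfy both necessary conditions, the maximal per-user DoF cannot exceed the minimum of the two bounds; taking the minimum over the two cross-cell directions $(i,j)$ then yields the converse. The achievability direction must show that this minimum is actually attained by some linear IA scheme.

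First I would derive $d^{\mathrm{Prop}}$ as a converse from the proper condition. Specializing \eqref{Eq:Proper_Condition_Asym} to the structured configuration $d_{i_k}=d$, $N_{i_k}=N_i$, I would select the subset $\mathcal{I}$ corresponding to the interference from cell $i$'s $K_i$ users onto BS$_j$ together with the interference from cell $j$'s $K_j$ users onto the users of cell $i$ (the cross-cell fully connected pattern for the ordered pair). Substituting $d_j=K_j d$, $M_j$, $N_i$ and the user counts into the inequality and collecting the coefficients of $d$, the terms $(M_j-d_j)d_j$, $(N_i-d)d$ summed over the appropriate index ranges give a quadratic balance whose rearrangement produces the affine bound $d \le \bigl(K_j M_j + K_i N_i\bigr)/\bigl(K_j^2 + K_j K_i + K_i\bigr)$, i.e.\ $d^{\mathrm{Prop}}(M_j,N_i,K_j,K_i)$. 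I would verify the algebra by checking that in the symmetric limit $M_j=N_i=M$, $K_i=K_j=K$ it collapses to the proper bound \eqref{Eq:Proper_Condition_sym}, namely $d \le (2KM)/(2K^2+K) = 2M/(2K+1)$, consistent with $M+N \ge (2K+1)d$.

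The converse then follows immediately: both the irreducible condition (hence $d^{\mathrm{Info}}$) and the proper condition (hence $d^{\mathrm{Prop}}$) are necessary for linear IA, so $d$ is bounded by their minimum over $i\neq j$. The harder part is achievability. I would invoke the achievability argument already sketched in the proof of Theorem~\ref{Theorem:DoF_Upperbound}: when both $M_1/N_2$ and $M_2/N_1$ lie in Region~II, the information-theoretic bound is achieved by a closed-form linear IA solution, so $d^{\mathrm{Info}}$ is attained; it then remains to show that whenever $d^{\mathrm{Prop}}$ is the binding constraint, one can construct precoders and combiners meeting the proper-condition equality while zero-forcing all residual interference. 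The plan is to count dimensions — comparing the number of IA constraints against the free variables in the beamforming matrices — and exhibit that the proper-satisfying configuration admits a solution, so that the achieved DoF equals $\min\{d^{\mathrm{Prop}}, d^{\mathrm{Info}}\}$ in every regime.

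\emph{The main obstacle} I expect is the achievability at the $d^{\mathrm{Prop}}$-limited configurations: proving that \emph{proper implies feasible} for this special two-cell class is not automatic, since properness is generally only necessary, not sufficient, for linear IA (as \eqref{Eq:Proper_Condition_Asym} and the second Remark emphasize). The resolution should exploit the special structure $d_{i_k}=d$, $N_{i_k}=N_i$ together with the two-cell topology, where the bipartite interference graph between the two cells is simple enough that a constructive or genericity-based dimension-counting argument closes the gap, mirroring the closed-form construction referenced for Region~II in the previous proof.
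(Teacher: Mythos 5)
Your proposal takes essentially the same approach as the paper: the converse is identical (specializing the proper condition \eqref{Eq:Proper_Condition_Asym} to the cross-cell connection pattern yields exactly the paper's inequality \eqref{Eq:Proper_Condition_Asym_G2}, hence $d^{\mathrm{Prop}}$, which is then combined with the bound $d^{\mathrm{Info}}$ already established in Theorem~\ref{Theorem:DoF_Upperbound}), and the achievability has the same two-case structure of a closed-form linear IA construction in one regime plus a genericity/dimension-counting feasibility argument in the style of \cite{TTTSP2013} in the other. The only difference is cosmetic: the paper's sketch invokes the closed-form construction when $d^{\mathrm{Prop}}\le d^{\mathrm{Info}}$ and the existence argument otherwise, while you assign the two techniques to the opposite cases; both versions rest on the same ingredients borrowed from \cite{TTTSP2013b} and \cite{TTTSP2013}.
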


\begin{proof}[Proof Skeleton]
We first prove that \eqref{Eq:Proper_DoF_ALL} is the DoF upper-bound and then prove that it is achievable.

After some tedious but regular manipulations, we can show that in
the considered two-cell MIMO-IBC, the proper condition in
\eqref{Eq:Proper_Condition_Asym} becomes
\begin{align}\label{Eq:Proper_Condition_Asym_G2}
{\left( {{M_j} - {K_jd}} \right)} {K_jd} + {\left( {{N_{i}} - {d}} \right)} {K_id} \ge
K_jK_id^2，~\forall i\neq j
\end{align}
From \eqref{Eq:Proper_Condition_Asym_G2}, we can obtain $d \leq d^{\mathrm{Info}}\left({{M_j},{N_i}},K_j,K_i\right)$ in \eqref{Eq:Proper DoF} directly.

Since \eqref{Eq:Proper DoF} is one DoF upper-bound obtained from the proper condition, it is called \emph{proper DoF bound}. Moreover, $d^{\mathrm{Info}}\left({{M_j},{N_i}},K_j,K_i\right)$ is another DoF upper-bound obtained from the irreducible condition, which is proved in Theorem \ref{Theorem:DoF_Upperbound}. Consequently, \eqref{Eq:Proper_DoF_ALL} is the upper-bound of the DoF per user achieved by linear IA.

Following the similar analysis in \cite{TTTSP2013b}, we can show
that if the proper DoF bound is not higher than the information
theoretic maximal DoF, the DoF can be achieved by a closed-form linear
IA, otherwise, there exist at least one feasible solution for linear
IA following a similar proof as in \cite{TTTSP2013}.
\end{proof}

\begin{remark}
In the considered two-cell MIMO-IBC, the maximal DoF per user
achieved by linear IA is obtained from the proper condition and the
irreducible condition simultaneously. Therefore, the combination of
the proper condition and the irreducible condition is the sufficient
and necessary condition for linear IA feasibility.
\end{remark}

\begin{remark}
When the proper DoF bound is achievable, \eqref{Eq:Proper DoF}
implies that to achieve the desired number of data streams, if one
BS increases or reduces the transmit antennas, the MSs in another
cell can reduce or should increase the receive antennas.
Specifically, let $L={K}_{j}/{K}_{i}$ where $K_j>K_i$. If the number
of transmit antennas at BS$_j$ is $M_{j}\pm\Delta$, the number of
receive antenna at MSs in cell $i$ should be $N_{i}\mp L \Delta$. As
a result, when a cell that supports more users (or equivalently that
needs to transmit more data streams) has redundant antennas to help
eliminate ICI, the overall antenna resource in the network can be
reduced. This suggests that a heterogeneous network with different
downlink traffics and antenna resources among multiple cells is more
spectrally efficient than a homogeneous network, if we allow the
more powerful cell (such as a macro-cell whose BS is equipped with
more antennas) to help remove the ICI for the resource-limited cell
(such as a micro-cell).
\end{remark}

\begin{remark}
Since a two-cell MIMO-IBC can be treated as a partially connected case of a $G$-cell MIMO-IBC, the necessary condition for the two-cell MIMO-IBC must be the necessary condition for the $G$-cell MIMO-IBC. Consequently, the DoF per user upper-bounds in Theorems \ref{Theorem:DoF_Upperbound} and \ref{Theorem:Linear_DoF_Upperbound} are the DoF per user upper-bounds for $G$-cell MIMO-IBC $\prod_{i=1}^{G}(M_i\times (N_i,d)^{K_i})$ with asymptotic and linear IA, respectively.
\end{remark}


%
%
%
\section{Conclusion}
In this paper, we analyzed the DoF of the asymmetric MIMO-IBC. By generalizing the genie chain considered in \cite{TTTSP2013b}, we found that irreducible condition is
necessary for both linear IA and asymptotic IA feasibility. From the irreducible condition, we derived the information theoretic DoF outer-bound for arbitrary $G$-cell MIMO-IBC and the information theoretic maximal DoF per user for a special class of two-cell MIMO-IBC with the antenna configurations in both Regions I and II. By contrast, the reducible condition derived in \cite{TTTSP2013b} only leads to the information theoretic maximal DoF for symmetric MIMO-IBC with the antenna configurations in Region II.
By combining the proper condition and irreducible condition, we obtained the DoF outer-bound for arbitrary MIMO-IBC achieved by linear IA and the maximal achievable DoF per user for the special two-cell MIMO-IBC. By comparing the information theoretic maximal DoF and the maximal DoF achieved by linear IA, we showed when the linear IA can achieve the information theoretic maximal DoF.

\bibliographystyle{IEEEtran}


\end{document}